\newtheorem{theorem}{Theorem}
\newtheorem{lemma}{Lemma}
\newcommand{\Tr}{\operatorname{Tr}}
\newcommand{\ket}[1]{|#1\rangle}
\newcommand{\bra}[1]{\langle #1|}
\newcommand{\Cl}{\mathcal{C}_{l_1}}
\newcommand{\I}{\mathcal{I}}
\newcommand{\C}{\mathcal{C}}
\newcommand{\n}{\nonumber\\}
\newcommand{\M}{\mathcal{M}}
\title{Coherence transformations in single qubit systems}
\author[1,2]{Hai-Long Shi}
\author[2,3,*]{Xiao-Hui Wang}
\author[1,3,4]{Si-Yuan Liu}
\author[1,3]{Wen-Li Yang}
\author[2,3]{Zhan-Ying Yang}
\author[4,1,3]{Heng Fan}
\affil[1]{Institute of Modern Physics, Northwest University, Xi'an 710069, China}
\affil[2]{School of Physics, Northwest University, Xi'an 710069, China}
\affil[3]{Shaanxi Key Laboratory for Theoretical Physics Frontiers, Xi'an 710069, China}
\affil[4]{Institute of Physics, Chinese Academy of Sciences, Beijing 100190, China}
\affil[*]{xhwang@nwu.edu.cn}
\begin{abstract}
We investigate the single qubit transformations
under several typical coherence-free operations, such as,
incoherent operation (IO), strictly incoherent operation (SIO),
physically incoherent operation (PIO), and coherence-preserving operation (CPO).
Quantitative connection has been built between IO and SIO in single qubit systems.
Moreover, these coherence-free operations have a clear hierarchical relationship in single qubit systems:
CPO $\subset$ PIO $\subset$ SIO=IO. A new and explicit proof for the necessary and sufficient condition of single qubit transformation via IO or SIO
has been provided,
which indicates that SIO with only two Kraus operators are enough to realize this transformation.
The transformation regions of single qubits via CPO and PIO are also given.
Our method provides a geometric illustration to analyze single qubit coherence transformations
by introducing the Bloch sphere depiction of the transformation regions,
and tells us how to construct the corresponding coherence-free operations.
\end{abstract}
\begin{document}

\flushbottom
\maketitle

\thispagestyle{empty}

\section*{Introduction}

Quantum resource theory has become a powerful tool
in quantitatively describing
many intriguing and novel characteristics of quantum systems \cite{Gour}.
A general quantum resource theory includes two basic ingredients: ``free'' states and ``free'' quantum operations.
A major concern of any resource theory is how to quantify and manipulate these resource states, i.e., non-free states.
Much attention has been paid to this direction
\cite{Vedral,eanglement-measure,discord-measure1,discord-measure2,
Baumgratz, coherence-measure1, coherence-measure2, RDM, Guo,
entanglement-transformation, coherence-transformations, PIO, PIO-3, PIO-2}.
For instance, in the resource theory of entanglement,
the free operations are local quantum operations with classical communication (LOCC),
and possible entanglement manipulations between bipartite entangled states via LOCC are determined by majorization \cite{entanglement-transformation}.
Recently, quantum coherence, as another embodiment of quantum states superposition principle,
has received widespread attention and scrutiny since
it can be viewed as a vital quantum resource in various quantum information processes,
such as, quantum algorithms
\cite{quantum-algorithm1, quantum-algorithm2, quantum-algorithm3, quantum-algorithm4, quantum-algorithm5},
quantum metrology \cite{quantum-metrology1, quantum-metrology2},
and quantum channel discrimination \cite{channel-discrimination1, channel-discrimination2}.
Besides, many coherence-free operations have been proposed,
including incoherent operation (IO) \cite{Baumgratz}, strictly incoherent operation (SIO) \cite{SIO1, SIO2},
physically incoherent operation (PIO) \cite{PIO}, coherence-preserving operation (CPO) \cite{CPO},
and ``maximal'' incoherent operation (MIO) \cite{MIO}.
A natural question is how to utilize this precious quantum resource via coherence-free operations
for the realization of quantum state transformations.

In Ref. \citen{coherence-transformations}, it has been shown that
a pure state $\ket{\psi}$ can be transformed to another pure state $\ket{\phi}$ using IO
if and only if the square moduli of superposed coefficients $(|\psi_1|^2,\ldots,|\psi_d|^2)^{t}$ are majorized by
$(|\phi_1|^2,\ldots,|\phi_d|^2)^{t}$.
For the case of mixed state, Chitambar and Gour \cite{PIO, PIO-3, PIO-2}
considered the transformations of single qubit mixed states
and first obtained a necessary and sufficient condition
for single qubit transformations by either SIO, DIO, IO, or MIO.
The proof of this condition also tells us how to construct the corresponding SIO
for possible single qubit transformations.
However, this construction for realization of the single qubit transformation from $\rho$ to $\rho'$
needs an intermediate state $\rho'_{max}$,
i.e., $\rho\rightarrow\rho'_{max}\rightarrow\rho'$ \cite{PIO-2}.
Thus four Kraus operators are needed to construct a SIO for a direct transformation: $\rho\rightarrow\rho'$.
For this reason, we would like to provide a direct approach to complete transformation from $\rho$ to $\rho'$,
where less Kraus operators are needed.
In addition, we will use the Bloch sphere depiction of single qubit \cite{Nielsen}
to better illustrate and understand the coherence transformation of single qubit.

In this paper, we discuss how to implement single qubit transformations
via four kinds of incoherent operations,
namely, IO, SIO, PIO, and CPO.
Firstly, we use the Bloch sphere depiction to parameterize single qubit and
discover that the transformation ability of single qubit via four kinds of incoherent operations
has rotational symmetry around $z$-axis in the cylindrical coordinates,
which simplifies the following discussion.
Secondly, in single qubit systems, the relation between IO and SIO is IO=SIO,
which has been proposed by Chitambar and Gour \cite{PIO-3, PIO-2}.
Further, we build the quantitative connection between them in single qubit systems.
Then we offer a new method to construct the map for realization of
single qubit transformation via IO,
where the intermediate state $\rho'_{max}$ is no longer necessary
and only two special Kraus operators are needed.
One of them is represented by a diagonal matrix and the other is represented by an anti-diagonal matrix.
Additionally, by exploring these two special Kraus operators,
we provide a different and  explicit proof for the necessary and sufficient condition of single qubit transformation via IO.
The transformation regions of CPO, IO, PIO are also obtained in the Bloch sphere depiction.
Finally, we discuss two examples:
maximally coherent state transformations via IO
and pure state transformations via IO.
Our results offer new insight into the power of incoherent operations in quantum state manipulation
by introducing the Bloch sphere depiction of the transformation region.

\section*{Results}\label{sec2}
\textbf{{Definition.}-}\quad
To begin with, let us first give a brief review of several typical incoherent operations
and coherence measures.
In quantifying coherence \cite{Baumgratz},
a particular base $\{\ket{i}\}$ should be chosen and fixed.
The density operators of incoherent quantum states $\delta$ are diagonal in this base,
i.e., $\delta=\sum_ic_i\ket{i}\bra{i}$.
A set of these incoherent quantum states is labeled by $\mathcal{I}$,
and IO is denoted as $\Lambda^{IO}$,
where Kraus operators $\{K_n\}_{n=1}^r$ fulfil
\begin{equation}
\frac{K_n\delta K_n^{\dag}}{\Tr[K_n\delta K_n^{\dag}]}\in \mathcal{I}.
\end{equation}

\begin{lemma}\label{lem:1}\cite{Sun} There exists at most one nonzero entry in every column of
the Kraus operator $K_n$ belonging to $\Lambda^{IO}$.
\end{lemma}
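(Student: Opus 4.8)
The plan is to test the incoherence condition on the simplest possible incoherent states, namely the pure diagonal states $\ket{j}\bra{j}$, and read off the structure of a single column directly. First I would fix the Kraus operator $K_n$ and a basis index $j$, and apply the defining property of $\Lambda^{IO}$ to the incoherent input $\delta = \ket{j}\bra{j}$. Since $\ket{j}\bra{j}\in\mathcal{I}$, the definition guarantees that $K_n\ket{j}\bra{j}K_n^{\dag}/\Tr[K_n\ket{j}\bra{j}K_n^{\dag}]$ is again incoherent, i.e.\ diagonal in the fixed basis.

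The key observation is that this output is essentially a single column of $K_n$ written as a rank-one matrix. Introducing the column vector $\ket{v_j} := K_n\ket{j}$, which is precisely the $j$-th column of $K_n$, we have
\begin{equation}
K_n\ket{j}\bra{j}K_n^{\dag} = \ket{v_j}\bra{v_j}.
\end{equation}
Its matrix elements are $(\ket{v_j}\bra{v_j})_{kl} = (v_j)_k\,\overline{(v_j)_l}$, so the incoherence (diagonality) of this matrix requires $(v_j)_k\,\overline{(v_j)_l}=0$ for every pair $k\neq l$. This product can vanish for all off-diagonal pairs only if at most one component $(v_j)_k$ is nonzero; indeed, if two distinct components were simultaneously nonzero, the corresponding off-diagonal entry would be nonzero, contradicting diagonality. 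Hence the $j$-th column $\ket{v_j}$ of $K_n$ has at most one nonzero entry, and since $j$ was arbitrary the claim follows for every column.

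There is essentially no deep obstacle here; the content is the elementary fact that a rank-one matrix $\ket{v}\bra{v}$ is diagonal exactly when $\ket{v}$ is supported on a single basis vector. The only point requiring a word of care is the normalization: if $\Tr[K_n\ket{j}\bra{j}K_n^{\dag}] = \langle v_j\ket{v_j} = 0$, then $\ket{v_j}=0$ and the $j$-th column is identically zero, which trivially satisfies the ``at most one nonzero entry'' conclusion, so this degenerate case needs no separate treatment. Thus the step I expect to dwell on is merely making the reduction to pure basis states $\ket{j}\bra{j}$ explicit and noting that the full incoherence condition is not needed—testing on these pure incoherent inputs already pins down the column structure.
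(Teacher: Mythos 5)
Your proof is correct and is essentially the standard argument for this lemma: the paper itself gives no proof (it cites Ref.~\citen{Sun}), and the proof in that reference proceeds exactly as you do, by testing the incoherence condition on the pure incoherent inputs $\ket{j}\bra{j}$ and noting that the rank-one output $K_n\ket{j}\bra{j}K_n^{\dag}$ is diagonal only if the $j$-th column of $K_n$ has at most one nonzero entry. Your handling of the degenerate case $\Tr[K_n\ket{j}\bra{j}K_n^{\dag}]=0$ is also the right observation and requires no further elaboration.
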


According to Lemma~\ref{lem:1}, the Kraus operators of IO can be expressed as $K_n=\sum_{i=0}^{d-1}c_{ni}\ket{f_n(i)}\bra{i}$, $n=1\cdots r$,
where $f_n:\{0,\ldots, d-1\}\rightarrow\{0,\ldots, d-1\}$ and $d$ is the dimension of Hibert space.
An incoherent operation is called SIO if its $K_n$ also satisfies \cite{SIO1, SIO2}
\begin{equation}
\frac{K_n^{\dag}\delta K_n}{\Tr[K_n^{\dag}\delta K_n]}\in\I.
\end{equation}
Similarly, we can get the form of SIO that every column and row of its $K_n$ has at most one nonzero entry.

The CPO was introduced in Ref. \citen{CPO} to reveal that
coherence of a state is intrinsically hard to preserve when
there is a lack of information about the state and the quantum channel.
A unitary and incoherent operation is CPO,
which keeps the coherence of quantum states invariant,
i.e., $\C[\Lambda^{CPO}(\rho)]=\C(\rho)$ ($\C$ is a coherence measure).
Thus, the Kraus operator of CPO takes the following form \cite{CPO}:
\begin{equation}
K=\sum_ie^{i\theta_i}\ket{\pi(i)}\bra{i},
\end{equation}
where $\pi$ is a permutation.
Note that a CPO belongs to a class of IO with only one Kraus operator
due to $\sum_nK_n^\dag K_n=I$ .

To establish a physically consistent resource theory,
the PIO was proposed to replace IO in quantifying coherence \cite{PIO}.
Since a set of Kraus operators can be physically realized by
introducing auxiliary particles and making appropriate unitary operations and projective measurement,
a PIO requires that they are all incoherent.
Following this ideal, the expression of PIO has been obtained in Ref. \citen{PIO}.
The PIO can be expressed as a convex combination of maps,
which have Kraus operators $\{K_n\}_{n=1}^r$ of the form:
\begin{equation}\label{PIO}
K_n=U_nP_n=\sum_ie^{i\theta_{ni}}\ket{\pi_n(i)}\bra{i}P_n,
\end{equation}
where the $P_n$ form an orthogonal and complete set of incoherent projectors.
Hence, these incoherent operations have a clear hierarchical relationship:
CPO $\subset$ PIO $\subset$ SIO $\subset$ IO.

The first rigorous framework of quantifying coherence was proposed in Ref. \citen{Baumgratz},
where a function $\C$ can be taken as a coherence measure if it satisfies the following conditions \cite{Baumgratz}:
\\(B1) $\C(\rho)\geq0$ for all quantum states and $\C(\rho)=0$ if and only if $\rho\in\mathcal{I}$;
\\(B2) $\C(\rho)\geq\sum_n p_n\C(\rho_n)$, where $p_n=\Tr(K_n\rho K_n^\dag)$, $\rho_n=K_n\rho K_n^\dag/p_n$,
and $K_n$ are the Kraus operators of IO;
\\(B2') $\C(\rho)\geq\C[\Lambda^{IO}(\rho)]$; and
\\(B3) $\sum_n p_n\C(\rho_n)\geq\C(\sum_n p_n \rho_n)$ with $p_n\geq0$ and $\sum_np_n=1$.
On the basis of this framework, the relative entropy of coherence
and $l_1$ norm of coherence were put forward to measure coherence degree of quantum states.
The $l_1$ norm of coherence is defined as \cite{Baumgratz}
\begin{equation}\label{cl1}
\Cl(\rho)=\sum_{i\neq j}|\rho_{ij}|,
\end{equation}
which comes from a simple fact that
coherence is linked with the off-diagonal elements of considered quantum states.
\\
\\
\textbf{{Relation between IO and SIO.}-}\quad
In the cylindrical coordinates, density matrices of single qubit systems can be written as
\begin{equation}
\rho=\frac{1}{2}\begin{pmatrix} 1+z & re^{-i\theta} \\   re^{i\theta} & 1-z \\ \end{pmatrix},
\end{equation}
where $-1\leq z\leq1$, $0\leq r\leq1$, and $0\leq\theta\leq\pi$.
We first prove the following Lemma~\ref{lem:2} to simplify our discussion.

\begin{lemma}\label{lem:2} $\rho_2=\Lambda(\rho_1)$ if and only if $\tilde{\rho_2}=\tilde{\Lambda}(\widetilde{\rho_1})$ where $\Lambda$ and $\tilde{\Lambda}$ are IO, and
\begin{equation}
\tilde{\rho}=\frac{1}{2}\begin{pmatrix} 1+z & r \\ r & 1-z \\ \end{pmatrix}.
\end{equation}
\end{lemma}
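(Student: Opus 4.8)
The plan is to exploit the fact that the only difference between $\rho$ and $\tilde\rho$ is the phase $\theta$ on the off-diagonal entries, and that this phase can be absorbed by a diagonal unitary which is itself a coherence-free operation. Concretely, I would introduce the diagonal unitary $D_\theta=\diag(e^{i\theta},1)$ and check by direct multiplication that
\begin{equation}
D_\theta\,\rho\,D_\theta^{\dag}=\tilde\rho,
\end{equation}
since conjugation by $D_\theta$ leaves the diagonal entries $\frac12(1\pm z)$ untouched and replaces the off-diagonal entries $re^{\mp i\theta}$ by $r$. The key observation is that $D_\theta$ has a single Kraus operator of the CPO form $\sum_i e^{i\theta_i}\ket{\pi(i)}\bra{i}$ with $\pi$ the identity permutation, so conjugation by $D_\theta$ (and by $D_\theta^{\dag}$) is a CPO, and hence an IO by the hierarchy CPO $\subset$ IO.

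With this in hand, the proof reduces to a closure-under-composition argument. Writing $\theta_1,\theta_2$ for the phases of $\rho_1,\rho_2$, I would set $D_1=\diag(e^{i\theta_1},1)$ and $D_2=\diag(e^{i\theta_2},1)$, so that $\widetilde{\rho_1}=D_1\rho_1 D_1^{\dag}$ and $\tilde\rho_2=D_2\rho_2 D_2^{\dag}$. For the forward direction, assuming $\rho_2=\Lambda(\rho_1)$ with $\Lambda$ an IO, I would define
\begin{equation}
\tilde\Lambda(\,\cdot\,)=D_2\,\Lambda\!\left(D_1^{\dag}(\,\cdot\,)D_1\right)D_2^{\dag},
\end{equation}
and verify $\tilde\Lambda(\widetilde{\rho_1})=D_2\Lambda(\rho_1)D_2^{\dag}=D_2\rho_2 D_2^{\dag}=\tilde\rho_2$. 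The reverse direction is entirely symmetric: from $\tilde\rho_2=\tilde\Lambda(\widetilde{\rho_1})$ one recovers $\rho_2=\Lambda(\rho_1)$ with $\Lambda(\,\cdot\,)=D_2^{\dag}\tilde\Lambda(D_1(\,\cdot\,)D_1^{\dag})D_2$.

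The one point that must be checked carefully — and what I regard as the main obstacle — is that $\tilde\Lambda$ (and likewise $\Lambda$ in the reverse direction) really is an IO, not merely a completely positive trace-preserving map. I would handle this in two mutually reinforcing ways. First, since $\tilde\Lambda$ is a composition of a CPO, an IO, and a CPO, and IO is closed under composition, $\tilde\Lambda$ is automatically IO. Second, to make this self-contained via Lemma~\ref{lem:1}, I would write the Kraus operators of $\tilde\Lambda$ explicitly as $D_2 K_n D_1^{\dag}$, where $\{K_n\}$ are the IO Kraus operators of $\Lambda$; since $D_1^{\dag}$ and $D_2$ are diagonal, left- and right-multiplication by them only rescales entries by unit-modulus phases and never creates new nonzero positions, so each $D_2 K_n D_1^{\dag}$ still has at most one nonzero entry per column and the IO condition of Lemma~\ref{lem:1} is preserved.

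This geometric content — that the transformation problem is invariant under rotation of $\theta$ about the $z$-axis — is exactly the rotational symmetry advertised in the introduction, and it lets all subsequent analysis proceed under the simplifying assumption $\theta=0$.
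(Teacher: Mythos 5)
Your proposal is correct and takes essentially the same approach as the paper: you conjugate by diagonal phase unitaries (your $D_\theta=\diag(e^{i\theta},1)$ versus the paper's $U=\diag(e^{-i\theta/2},e^{i\theta/2})$, an immaterial difference) and verify that the sandwiched Kraus operators $D_2K_nD_1^{\dag}$ remain incoherent, which is precisely the paper's computation $\tilde{K}_n=U_2^{\dag}K_nU_1=\sum_k u_2^{[f_n(k)]*}u_1^{(k)}c_{nk}\ket{f_n(k)}\bra{k}$. Your auxiliary CPO-plus-closure-under-composition remark is a harmless alternative packaging of the same verification.
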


\begin{proof} It is clear that $\rho=U\tilde{\rho}U^{\dag}$ with $U=$diag($e^{-i\theta/2}$, $e^{i\theta/2}$).
If $\rho_2=\Lambda(\rho_1)$ then we have
\begin{eqnarray}
\tilde{\rho_2}&=&U_2^\dag\Lambda(U_1\tilde{\rho_1}U_1^{\dag})U_2
=\sum_nU_2^\dag K_nU_1\tilde{\rho_1}U_1^{\dag}K_n^{\dag}U_2.
\end{eqnarray}
Let $\tilde{K_n}=U_2^\dag K_nU_1$. It is easy to check that $\sum_n\tilde{K_n}^\dag\tilde{K_n}=I$.
Now let us show that $\tilde{K_n}$ is also incoherent.
Suppose $K_n=\sum_ic_{ni}\ket{f_n(i)}\bra{i}$ then we have
\begin{eqnarray}
\tilde{K_n}=U_2^\dag K_nU_1&=&\sum_{ijk}u^{(i)*}_2\ket{i}\bra{i}c_{nj}\ket{f_n(j)}\bra{j}u^{(k)}_1\ket{k}\bra{k}
=\sum_{k}u^{[f_n(k)]*}_2u^{(k)}_1c_{nk}\ket{f_n(k)}\bra{k},
\end{eqnarray}
which means that $\tilde{K_n}$ is also incoherent.
By using the same approach, we can prove that there exists an IO making $\rho_2=\Lambda(\rho_1)$ when $\tilde{\rho_2}=\tilde{\Lambda}(\widetilde{\rho_1})$.
\end{proof}

Lemma~\ref{lem:2} also holds for SIO, PIO, or CPO.
This lemma implies that the coherence transformation ability of single qubit
is depended only on two parameters ($z,r$)
and not on the parameter $\theta$,
i.e., rotational symmetry around z-axis.
Therefore, we only need to consider the coherence transformations between the quantum states of $\tilde{\rho}$.
In the following text, we use symbol $\rho$ to represent $\tilde{\rho}$ for convenience.
Meanwhile, we denote initial qubit $\rho$ by $(z,$ $r)$ and represent
transformation region $\rho'$ of the initial qubit $\rho$ via coherence-free operations by $(z',$ $r')$.
With these notions, we prove the following theorem.

\begin{theorem}\label{thm:1}
In single qubit systems, the transformation region given by IO
is equal to the transformation region given by SIO.
\end{theorem}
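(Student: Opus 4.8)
The plan is to establish the two inclusions separately. Since every SIO Kraus operator obeys the at-most-one-nonzero-entry condition on \emph{columns} in addition to rows, every SIO is in particular an IO; hence the transformation region reachable by SIO is automatically contained in the one reachable by IO. By Lemma~\ref{lem:2} I may assume the input $\rho$ has real off-diagonal entries, so it suffices to prove the reverse inclusion for such $\rho$: given any IO $\Lambda$ with $\rho'=\Lambda(\rho)=\sum_n K_n\rho K_n^{\dag}$, I would construct an SIO reproducing $\rho'$.

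First I would classify the Kraus operators in dimension two. By Lemma~\ref{lem:1} each $K_n$ has at most one nonzero entry per column, so it is of one of four types: diagonal, anti-diagonal, or one of the two ``collapsing'' operators $\ket{0}\bra{u}$ and $\ket{1}\bra{w}$ whose range is a single incoherent basis vector. The diagonal and anti-diagonal operators already satisfy the SIO row condition, so they need no modification. A collapsing operator $\ket{0}\bra{u}$ sends $\rho$ to the incoherent state $\bra{u}\rho\ket{u}\,\ket{0}\bra{0}$, and similarly for $\ket{1}\bra{w}$; grouping all of them, their total contribution to $\rho'$ is $\Tr[\rho M_0]\ket{0}\bra{0}+\Tr[\rho M_1]\ket{1}\bra{1}$, where $M_0=\sum\ket{u}\bra{u}$ and $M_1=\sum\ket{w}\bra{w}$ are their completeness contributions.

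The replacement step is to split each collapsing operator into an SIO pair: writing $\bra{u}=(a,b)$, the operator $\ket{0}\bra{u}$ becomes the diagonal operator $a\ket{0}\bra{0}$ together with the anti-diagonal operator $b\ket{0}\bra{1}$, and analogously for $\ket{1}\bra{w}$. The key consistency check is trace preservation: diagonal and anti-diagonal operators contribute only diagonal terms to $\sum_n K_n^{\dag}K_n$, so the requirement $\sum_n K_n^{\dag}K_n=I$ forces the off-diagonal parts of $M_0$ and $M_1$ to cancel, and consequently the replaced family is still complete. The only remaining discrepancy is the interference term that a genuine measurement $\ket{0}\bra{u}$ records but its diagonal--anti-diagonal surrogate does not: tracking it shows the surrogate channel outputs $\rho'-\eta\,\diag(1,-1)$, with $\eta$ proportional to the real part of the off-diagonal of $M_0$ times the coherence of $\rho$. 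This residual is a pure population transfer between $\ket{0}$ and $\ket{1}$, which I can reinstate with additional diagonal and anti-diagonal operators, recovering $\rho'$ exactly.

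The main obstacle is precisely the control of these interference terms. A single collapsing operator performs a coherent projective measurement that no single SIO operator can imitate, so the argument cannot proceed operator by operator; it must exploit the global cancellation of the off-diagonal completeness contributions forced by $\sum_n K_n^{\dag}K_n=I$, and then re-inject the net population transfer through extra SIO operators without disturbing the off-diagonal (coherence) part of the output. Verifying that these compensating operators remain valid Kraus operators --- nonnegative weights, correct completeness, and the SIO row-and-column structure --- together with the reality reduction afforded by Lemma~\ref{lem:2}, is the technical heart of the proof.
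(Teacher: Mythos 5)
Your outline follows the same skeleton as the paper's proof: classify two-dimensional incoherent Kraus operators into diagonal ($\mathcal{M}_3$), anti-diagonal ($\mathcal{M}_4$), and collapsing types, leave the first two untouched, and trade the collapsers for diagonal/anti-diagonal surrogates, using the fact that $\sum_n K_n^\dag K_n=I$ forces the off-diagonal completeness contributions of the collapsers to cancel globally. Your intermediate observations are all correct: the surrogate family remains complete, the residual is traceless and hence a pure population transfer $\eta\,\diag(1,-1)$ with $\eta$ proportional to $r\,\mathrm{Re}[(M_0)_{01}]$. Where you diverge from the paper is in execution: the paper replaces \emph{all} collapsers at once by a single pair $K_0=\diag(a,b)$ and anti-diagonal $K_1$, imposing two completeness conditions, one diagonal-output matching condition, and the phase condition $ab^*+c^*d=0$ that kills the surrogate's off-diagonal output, then solves explicitly and checks non-negativity via $h_1=2\sum_i\Tr(K_i\rho K_i^\dag)\geq0$, $h_2\geq0$. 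You instead split each collapser individually and plan to patch the mismatch afterwards.

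The gap is the patch itself. You assert that the population transfer can be reinstated ``with additional diagonal and anti-diagonal operators'' and then concede that verifying non-negative weights, completeness, and the SIO structure of these compensating operators ``is the technical heart of the proof'' --- that is, the proposal stops exactly where the paper does its real work, namely the explicit solution~(\ref{solution-IO=SIO}) and its non-negativity check. The step does not fail, but closing it requires an argument you never give. Here is why it works: after your splitting, every surrogate piece from a collapser has a single nonzero entry and therefore contributes \emph{nothing} to the off-diagonal of the output, which is carried entirely by the untouched $\mathcal{M}_3/\mathcal{M}_4$ operators (so your worry about disturbing the coherence is moot --- though note you correctly avoid composing with a second SIO, which \emph{would} damp the off-diagonal; the repair must happen inside the family). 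The patch is then a purely diagonal feasibility problem: letting $s_0,s_1$ be the diagonal completeness deficits left by the collapsers and $h_1$ their total output weight on $\ket{0}\bra{0}$ (interference included), one needs weights $u\in[0,s_0]$, $v\in[0,s_1]$ with $(1+z)u+(1-z)v=h_1$, realized by operators $\sqrt{u}\,\ket{0}\bra{0}$, $\sqrt{s_0-u}\,\ket{1}\bra{0}$, $\sqrt{v}\,\ket{0}\bra{1}$, $\sqrt{s_1-v}\,\ket{1}\bra{1}$. Feasibility holds because $h_1\geq0$, $h_2\geq0$, and --- this is exactly the global cancellation you invoked, made quantitative --- $h_1+h_2=(1+z)s_0+(1-z)s_1$, so $h_1$ lies in the interval $[0,(1+z)s_0+(1-z)s_1]$ attained on the box. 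Without this quantitative step, the existence of valid compensating operators, which is the entire content of the reverse inclusion, remains unproven.
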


\begin{proof}
Define four types of Kraus operators as follows
\begin{eqnarray}
\mathcal{M}_1=\begin{pmatrix} \times&\times\\0&0\end{pmatrix}
,\quad\mathcal{M}_2=\begin{pmatrix} 0&0\\ \times&\times\end{pmatrix},\n
\mathcal{M}_3=\begin{pmatrix} \times&0\\0&\times\end{pmatrix}
,\quad\mathcal{M}_4=\begin{pmatrix} 0&\times\\ \times&0\end{pmatrix},
\end{eqnarray}
where "$\times$" means that the elements of matrix may not equal to zero.
The above four types of Kraus operators depict all IO applied in single qubit transformations
and the maps whose Kraus operators belonging to $\M_3$ or $\M_4$ are SIO.

Suppose that we have any IO represented by a set of Kraus operators $\Lambda^{IO}=\{K_i,$ $K_j,$ $K_l\}$
where
\begin{equation}
K_i=\begin{pmatrix}A_i&B_i\\0&0\end{pmatrix},\quad K_j=\begin{pmatrix}0&0\\C_j&D_j\end{pmatrix},
\end{equation}
and $K_l\in\M_3\cup\M_4$.
Next we would like to replace $\Lambda^{IO}$ with $\Lambda^{SIO}$
while keeping $\Lambda^{SIO}(\rho)=\Lambda^{IO}(\rho)$.
Here, the SIO is in the form of $\Lambda^{SIO}=\{K_0,$ $K_1,$ $K_l\}$
and $K_0$, $K_1\in\M_3\cup\M_4$.
Define
\begin{equation}
K_0=\begin{pmatrix}a&0\\0&b\end{pmatrix}\quad \mathrm{and}\quad K_1=\begin{pmatrix}0&d\\c&0\end{pmatrix}.
\end{equation}
Now we prove that there exist $a$, $b$, $c$, and $d$ making
\begin{eqnarray}\label{SIO=IO}
\left\{
  \begin{array}{ll}
  \Lambda^{SIO}(\rho)=\Lambda^{IO}(\rho);\\
     K_0^\dag K_0+K_1^\dag K_1+\sum_l K_l^\dag K_l=I.
  \end{array}
\right.
\end{eqnarray}
By using the relationship $\sum_iK_i^\dag K_i+\sum_jK_j^\dag K_j+\sum_l K_l^\dag K_l=I$,
Eq. (\ref{SIO=IO}) reduces to
\begin{subequations}
\begin{numcases}{}
|a|^2+|c|^2=|A|^2+|C|^2;\\
|b|^2+|d|^2=|B|^2+|D|^2;\\
|a|^2(1+z)+|d|^2(1-z)=h_1;\\
\label{ab*+c*d}ab^*+c^*d=0,
\end{numcases}
\end{subequations}
where $h_1=|A|^2(1+z)+r\sum_i(B_iA_i^*+A_iB_i^*)+|B|^2(1-z)$,
$|A|^2=\sum_i|A_i|^2$, $|B|^2=\sum_i|B_i|^2$, $|C|^2=\sum_j|C_j|^2$,
and $|D|^2=\sum_j|D_j|^2$.
The Eq. (\ref{ab*+c*d}) can be rewritten as
\begin{equation}
|a|^2|b|^2=|c|^2|d|^2.
\end{equation}
since we can choose suitable phases for $a$, $b$, $c$, and $d$ to satisfy Eq. (\ref{ab*+c*d}).
Solving it we obtain
\begin{eqnarray}\label{solution-IO=SIO}
\left\{
\begin{array}{rcl}
|a|^2&=&(|A|^2+|C|^2)\frac{h_1}{h_1+h_2};\\
|b|^2&=&\frac{h_2}{1-z}-\frac{(1+z)(|A|^2+|C|^2)h_2}{(1-z)(h_1+h_2)};\\
|c|^2&=&(|A|^2+|C|^2)\frac{h_2}{h_1+h_2};\\
|d|^2&=&\frac{h_1}{1-z}-\frac{(1+z)(|A|^2+|C|^2)h_1}{(1-z)(h_1+h_2)},
\end{array}
\right.
\end{eqnarray}
where $h_2=|C|^2(1+z)+r\sum_j(D_jC_j^*+C_jD_j^*)+|D|^2(1-z)$.
Note that the solutions: $|a|^2$, $|b|^2$, $|c|^2$, and $|d|^2$ in Eq. (\ref{solution-IO=SIO})
may be negative.
Therefore, if we prove that they are always non-negative, then we can find SIO to replace IO.
Clearly, $h_1$ and $h_2$ are non-negative
due to $h_1=2\sum_i\Tr (K_i\rho K_i^\dag)$ and $h_2=2\sum_j\Tr (K_j\rho K_j^\dag)$.
Hence, the $|a|^2$, $|b|^2$, $|c|^2$, and $|d|^2$ of Eq. (\ref{solution-IO=SIO}) are non-negative.
\end{proof}

In Ref. \citen{PIO-2}, the authors have proved this result IO=SIO in single qubit systems
by the following two arguments: SIO $\subset$ IO $\subset$ MIO and MIO=SIO.
We provide a new and direct proof
and establish a quantitative correspondence between IO and SIO
in coherence transformations of single qubit systems.
By using Eq. (\ref{solution-IO=SIO}),
we can accurately construct a SIO to realize the role (quantum state transformations) of IO
in single qubit systems.
\\
\\
\textbf{{The transformation region given by CPO.}-}\quad
In the case of IO with only one Kraus operator $K$,
the $K$ must be unitary.
Hence, the $K$ also describe a CPO,
which can be expressed as $K=\sum_ie^{i\theta_i}\ket{\pi(i)}\bra{i}$.
For single qubit systems, the Karus operator of CPO has two forms:

$\bullet$ Case 1: $K=e^{i\theta_1}\ket{0}\bra{0}+e^{i\theta_2}\ket{1}\bra{1}$.
By using this type of CPO, the transformable quantum states are
\begin{eqnarray}
K\rho K^{\dag}=\frac{1}{2}\begin{pmatrix} 1+z && re^{i(\theta_1-\theta_2)} \\ re^{i(\theta_2-\theta_1)} && 1-z \end{pmatrix},
\end{eqnarray}
where initial state is $\rho=\frac{1}{2}\begin{pmatrix} 1+z && r \\ r && 1-z \\ \end{pmatrix}$.
We only need to consider quantum states in the form of real parameters due to Lemma~\ref{lem:1}.
Therefore, the transformable quantum states are $(z,$ $r)$ and $(z,$ $-r)$.

$\bullet$ Case 2: $K=e^{i\theta_1}\ket{0}\bra{1}+e^{i\theta_2}\ket{1}\bra{0}$. We have
\begin{eqnarray}
K\rho K^{\dag}=\frac{1}{2}\begin{pmatrix} 1-z && re^{i(\theta_1-\theta_2)} \\ re^{i(\theta_2-\theta_1)} && 1+z \end{pmatrix}.
\end{eqnarray}
The same procedure is easily adapted to obtain the transformable quantum states, $(-z,$ $r)$ and $(-z,$ $-r)$, under this kind of CPO.

By using CPO, the initial quantum state $(z,$ $r)$
can be transformed to $(z,$ $\pm r)$ and $(-z,$ $\pm r)$ (see Fig. 1).
Besides, these transformations between four quantum states are reversible.
\begin{figure}[ht]
 \centering
 \includegraphics[height=6.5cm]{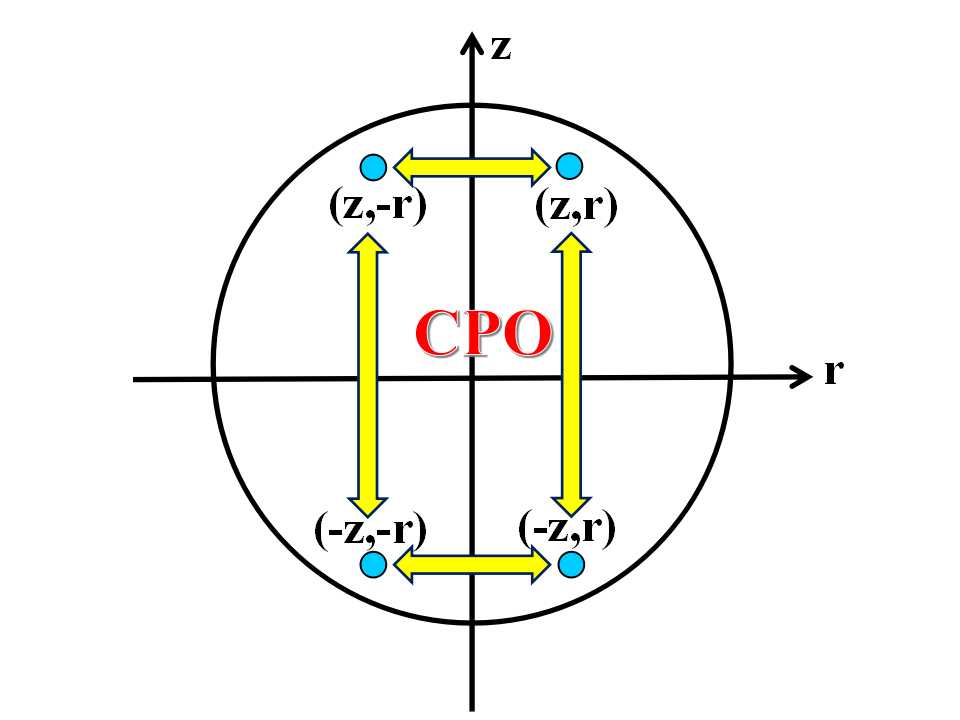}
 \caption{(Color online) Single qubit transformations under CPO, i.e., IO with only one Kraus operator.
  The initial quantum state is $(z,$ $r)$, and transformation regions are $(z,$ $\pm r)$ and $(-z,$ $\pm r)$.
  Particularly, these transformations are reversible.}
 \label{fig:algorithm}
\end{figure}
\\
\\
\textbf{{The transformation region given by IO.}-}\quad
In this section, we will construct a special IO with only two Kraus operators
belonging to $\mathcal{M}_3$ and $\mathcal{M}_4$, respectively.
From this case, we will get a transformation region of single qubit under IO,
and then we will prove it is also a maximal transformation region in the section of Methods.

Now we consider a special IO in the form of
\begin{eqnarray}\label{Kraus-special}
K_0=c_{00}\ket{0}\bra{0}+c_{11}\ket{1}\bra{1},\n
K_1=c_{10}\ket{1}\bra{0}+c_{01}\ket{0}\bra{1}.
\end{eqnarray}
According to Lemma~\ref{lem:1}, the above Kraus operators are incoherent.
Substituting the Eq. (\ref{Kraus-special}) in to $\sum_iK_n^\dag K_n=I$, we obtain
\begin{eqnarray}
|c_{00}|^2+|c_{10}|^2=1,\n
|c_{11}|^2+|c_{01}|^2=1.
\end{eqnarray}
We suppose that $c_{00},c_{01},c_{10},c_{11}\in\mathbb{R}$
and consider the following cases:
\\Case 1: $c_{00}=\sqrt{\alpha}$, $c_{10}=\sqrt{1-\alpha}$, $c_{11}=\sqrt{\beta}$ and $c_{01}=\sqrt{1-\beta}$;
\\Case 2: $c_{00}=\sqrt{\alpha}$, $c_{10}=-\sqrt{1-\alpha}$, $c_{11}=\sqrt{\beta}$ and $c_{01}=\sqrt{1-\beta}$;
\\Case 3: $c_{00}=-\sqrt{\alpha}$, $c_{10}=\sqrt{1-\alpha}$, $c_{11}=\sqrt{\beta}$ and $c_{01}=\sqrt{1-\beta}$;
\\Case 4: $c_{00}=-\sqrt{\alpha}$, $c_{10}=-\sqrt{1-\alpha}$, $c_{11}=\sqrt{\beta}$ and $c_{01}=\sqrt{1-\beta}$.
\\The qubit $\rho=\frac{1}{2}\begin{pmatrix} 1+z && r \\ r && 1-z \\ \end{pmatrix}$ after this type of IO becomes
\begin{equation}
\Lambda^{IO}(\rho)=\frac{1}{2}
\begin{pmatrix} 1+z' && r' \\ r' && 1-z' \\ \end{pmatrix},
\end{equation}
where $1+z'=\alpha(1+z)+(1-\beta)(1-z)$ and $r'=\lambda r$ with $\lambda=\sqrt{\alpha\beta}+\sqrt{(1-\alpha)(1-\beta)}$ in case 1.
In case 2, $\lambda=\sqrt{\alpha\beta}-\sqrt{(1-\alpha)(1-\beta)}$.
In case 3, $\lambda=-\sqrt{\alpha\beta}+\sqrt{(1-\alpha)(1-\beta)}$.
In case 4, $\lambda=-\sqrt{\alpha\beta}-\sqrt{(1-\alpha)(1-\beta)}$.
Note that
\begin{equation}
|\lambda|\leq\sqrt{\alpha\beta}+\sqrt{(1-\alpha)(1-\beta)}\leq1.
\end{equation}
Therefore,
\begin{equation}\label{r'-region}
|r'|\leq|r|.
\end{equation}

Setting $\tilde{\alpha}=\frac{1}{\sqrt{2}}(\alpha+\beta-1)$ and $\tilde{\beta}=\frac{1}{\sqrt{2}}(\alpha-\beta)$, then we have
\begin{equation}\label{lambda-equation}
\frac{2}{\lambda^2}\tilde{\alpha}^2+\frac{2}{1-\lambda^2}\tilde{\beta}^2=1,
\end{equation}
where case 1 corresponds to $\sqrt{\alpha\beta}\leq\lambda$ and $(\lambda^2+\alpha+\beta-1)/\lambda\geq0$;
case 2 corresponds to $\sqrt{\alpha\beta}\geq\lambda$ and $(\lambda^2+\alpha+\beta-1)/\lambda\geq0$;
case 3 corresponds to $\sqrt{\alpha\beta}\geq-\lambda$ and $(\lambda^2+\alpha+\beta-1)/\lambda\leq0$;
and, case 4 corresponds to $\sqrt{\alpha\beta}\leq-\lambda$ and $(\lambda^2+\alpha+\beta-1)/\lambda\leq0$.
According to Eq. (\ref{lambda-equation}), $\tilde{\alpha}$ and $\tilde{\beta}$ can be parameterized via $0\leq\theta\leq2\pi$
in the form of $\tilde{\alpha}=\sin\theta\lambda/\sqrt{2}$ and $\tilde{\beta}=\cos\theta\sqrt{(1-\lambda^2)/2}$.
The $z'$ expressed by $\theta$ is
\begin{eqnarray}\label{step2}
z'=\sqrt{(\lambda z)^2+1-\lambda^2}\sin(\theta+\phi),
\end{eqnarray}
where
\begin{eqnarray}\label{step1}
\cos\phi=\frac{\lambda z}{\sqrt{(\lambda z)^2+1-\lambda^2}},\n
\sin\phi=\sqrt{\frac{1-\lambda^2}{(\lambda z)^2+1-\lambda^2}}.
\end{eqnarray}
Above equation implies that
\begin{equation}\label{z'-region}
-\sqrt{(\lambda z)^2+1-\lambda^2}\leq z'\leq\sqrt{(\lambda z)^2+1-\lambda^2},
\end{equation}
whose boundary is an ellipse
\begin{equation}
\frac{z'^2}{1}+(1-z^2)\frac{r'^2}{r^2}=1.
\end{equation}

According to Eq. (\ref{r'-region}) and Eq. (\ref{z'-region}),
we obtain the transformation region $(z',r')$
\begin{eqnarray}\label{main result}
\left\{
  \begin{array}{ll}
    \frac{z'^2}{1}+(1-z^2)\frac{r'^2}{r^2}\leq1,\\
    |r'|\leq |r|,
  \end{array}
\right.
\end{eqnarray}
by using this special IO (see Fig. 2),
where $(z,r)$ represents the initial quantum states.
\begin{figure}[ht]
 \centering
 \includegraphics[height=6cm]{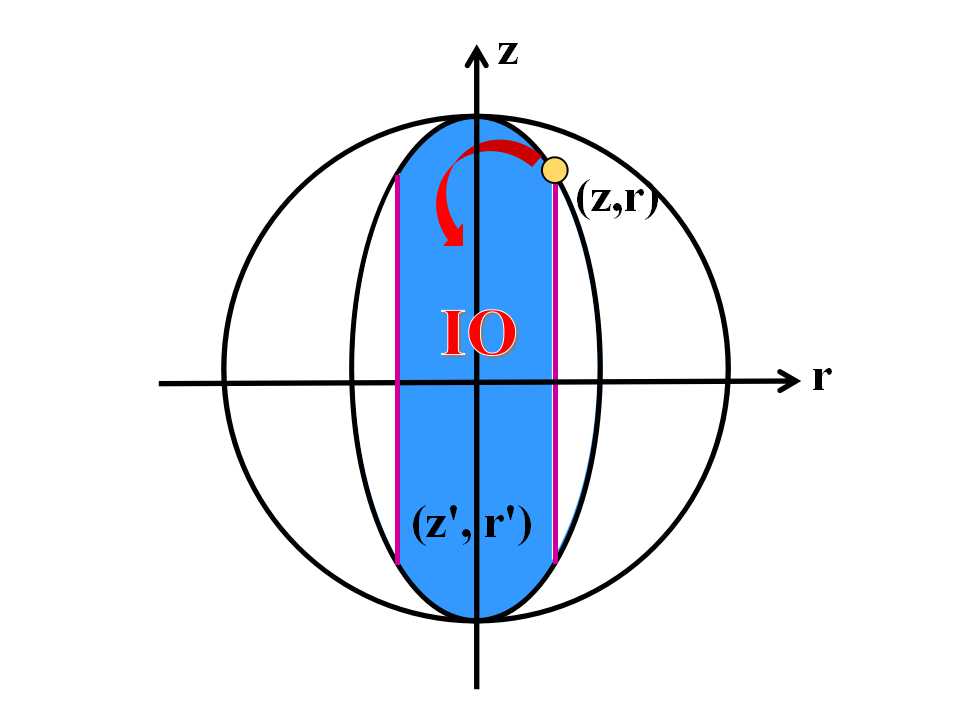}
 \caption{(Color online)  The transformation region of single qubit by IO or SIO is depicted by blue.
 The absolute value $|r|$ of purple lines is the $l_1$ norm of coherence of the initial state $(z,r)$.}
\end{figure}

\begin{theorem}\label{thm:2}
In single qubit systems, the region given by Eq. (\ref{main result})
is the maximal transformation region of the initila state $(z,$ $r)$ by using IO or SIO.
\end{theorem}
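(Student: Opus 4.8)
The plan is to establish only the upper bound: since the explicit two-Kraus construction preceding the theorem already realizes every point on the boundary of Eq.~(\ref{main result}), it remains to show that \emph{no} IO can push the initial state $(z,r)$ outside this region. By Theorem~\ref{thm:1} the transformation regions of IO and SIO coincide, so I would bound the output of an arbitrary SIO. On a single qubit every SIO Kraus operator has at most one nonzero entry in each row and column, i.e.\ it belongs to $\M_3\cup\M_4$; hence a general SIO is a family of diagonal operators $D_k=\diag(a_k,b_k)$ together with anti-diagonal operators $X_m$ whose $(1,2)$ and $(2,1)$ entries are $d_m$ and $c_m$.

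First I would express the output $\rho'=\Lambda^{SIO}(\rho)$ in these data. Setting $P=\sum_k|a_k|^2$, $S=\sum_k|b_k|^2$, $Q=\sum_m|d_m|^2$, and $T=\sum_m|c_m|^2$, the completeness relation $\sum_nK_n^\dag K_n=I$ forces $P+T=1$ and $Q+S=1$. A direct computation of $\sum_nK_n\rho K_n^\dag$ then gives $1+z'=P(1+z)+Q(1-z)$ and $r'=|\lambda|\,r$, where $\lambda=\sum_k a_kb_k^*+\sum_m d_mc_m^*$ and the off-diagonal entry has been made real as in Lemma~\ref{lem:2}.

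Next I would bound the two coordinates. The triangle inequality together with Cauchy--Schwarz yields $|\lambda|\le\sqrt{PS}+\sqrt{QT}\le1$, so $|r'|\le r$; this recovers the second constraint and is just the monotonicity $\Cl(\rho)\ge\Cl(\rho')$ of the $l_1$ norm of coherence under IO. For the ellipse I would note that its left-hand side increases with $|r'|$, so it suffices to check it when $|r'|/r=\sqrt{PS}+\sqrt{QT}$. Writing $P=\cos^2\sigma$ and $Q=\cos^2\tau$ with $\sigma,\tau\in[0,\pi/2]$ and putting $\Psi=\sigma+\tau$, $X=\tau-\sigma$, one finds $\sqrt{PS}+\sqrt{QT}=\sin\Psi$ and $z'=\cos\Psi\cos X+z\sin\Psi\sin X$. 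A final Cauchy--Schwarz step gives $z'^2\le\cos^2\Psi+z^2\sin^2\Psi$, so that $z'^2+(1-z^2)\sin^2\Psi\le\cos^2\Psi+\sin^2\Psi=1$, which is precisely the ellipse of Eq.~(\ref{main result}).

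The hard part is the bookkeeping that collapses an arbitrary SIO into the two scalars $(P,Q)$ and the single complex number $\lambda$, and in particular the Cauchy--Schwarz estimate $|\lambda|\le\sqrt{PS}+\sqrt{QT}$: this is exactly the step showing that the one-diagonal/one-antidiagonal construction of Eq.~(\ref{Kraus-special}) is optimal, so that adding more Kraus operators cannot enlarge the region. Once $z'$ is cast in the form $\cos\Psi\cos X+z\sin\Psi\sin X$, both constraints reduce to Cauchy--Schwarz, and matching the equality condition $(\cos X,\sin X)\parallel(\cos\Psi,z\sin\Psi)$ against the four sign cases of the construction confirms that the bound is tight.
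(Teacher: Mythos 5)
Your proof is correct, but it reaches the converse bound by a genuinely different route than the paper. The paper, working with the same reduction (Theorem~\ref{thm:1} plus the observation that every qubit SIO Kraus operator lies in $\M_3\cup\M_4$) and the same data $z'=\sum_i|a_i|^2(1+z)+\sum_j|d_j|^2(1-z)-1$, $r'=gr$, fixes $z'$ and maximizes $|g|$ by Lagrange multipliers: it solves the stationarity system for the multipliers, eliminates them through a quadratic equation in $\kappa=[\lambda_1+\sqrt{\lambda_1^2+1/(1-z^2)}]^2$, and arrives at $|g|_{opt}=\sqrt{(1-z'^2)/(1-z^2)}$, which yields the ellipse. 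You instead get the same bound with no calculus at all: Cauchy--Schwarz gives $|\lambda|\le\sqrt{PS}+\sqrt{QT}$, the substitution $P=\cos^2\sigma$, $Q=\cos^2\tau$ turns this into $\sin\Psi$ and turns $z'$ into $\cos\Psi\cos X+z\sin\Psi\sin X$, and one more Cauchy--Schwarz gives $z'^2\le\cos^2\Psi+z^2\sin^2\Psi$, hence $z'^2+(1-z^2)r'^2/r^2\le1$; I checked the trigonometric identities ($P+Q-1=\cos\Psi\cos X$, $P-Q=\sin\Psi\sin X$) and they hold. Your argument is shorter and arguably more rigorous, since it is a global bound that sidesteps the delicacies of the paper's multiplier analysis (ruling out $\lambda_1=0$, selecting $\kappa_1$ over $\kappa_2$, boundary/degenerate configurations where some sums vanish, and verifying the stationary point is a maximum); what the paper's method buys in exchange is an explicit optimizer, since solving the stationarity system identifies the extremal Kraus data and connects directly to the constructive two-operator family of Eq.~(\ref{Kraus-special}). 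Two trivial remarks: achievability needs the \emph{full} region, not just its boundary, but this is already supplied by the computation preceding the theorem showing the two-Kraus family fills Eq.~(\ref{main result}); and your final equality-condition check, while a nice consistency observation, is redundant for the same reason.
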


In the section of Methods, we will provide a complete proof of Theorem~\ref{thm:2}.
Theorem~\ref{thm:2} suggests that only two Kraus operators, which have the form of Eq. (\ref{Kraus-special}),
can describe all IO completely in single qubit systems.
Calculating the $l_1$ norm of coherence for single qubit systems via Eq. (\ref{cl1}),
we have
\begin{equation}\label{cl1'}
\Cl(\rho)=\sum_{i\neq j}|\rho_{ij}|=|r|,
\end{equation}
which is the boundary of transformation region (purple lines in Fig. 2).
It is consistent with the condition (B2') that the coherence of quantum states should not increase under IO.
Note that Theorem~\ref{thm:2} is also a necessary and sufficient condition to judge
whether a qubit can be transformed to another qubit via IO.
By using robustness of coherence and $\Delta$ robustness of coherence,
Ref. \citen{PIO, PIO-2, PIO-3} also provide a necessary and sufficient condition for single qubit transformations via IO,
which is consistent with our Eq. (\ref{main result}).
\\
\\
\textbf{{The transformation region given by PIO.}-}\quad
According to Eq. (\ref{PIO}), for any given orthogonal and complete set of incoherent projectors
($\{P_0=\ket{0}\bra{0}, P_1=\ket{1}\bra{1}\}$ or $\{P_0=I\}$),
the Kraus operators of single qubit systems have the following forms:

\begin{subequations}

\begin{equation}\label{PIO1}
\mathcal{K}_1=\left\{
K_0=\begin{pmatrix}
e^{i\theta_{00}}&0\\0&0
\end{pmatrix},
K_1=\begin{pmatrix}
0&0\\0&e^{i\theta_{11}}
\end{pmatrix}
\right\},\quad
\mathcal{K}_2=\left\{
K_0=\begin{pmatrix}
0&0\\e^{i\theta_{00}}&0
\end{pmatrix},
K_1=\begin{pmatrix}
0&e^{i\theta_{11}}\\0&0
\end{pmatrix}
\right\},
\end{equation}

\begin{equation}\label{PIO3}
\mathcal{K}_3=\left\{
K_0=\begin{pmatrix}
e^{i\theta_{00}}&0\\0&0
\end{pmatrix},
K_1=\begin{pmatrix}
0&e^{i\theta_{11}}\\0&0
\end{pmatrix}
\right\},\quad
\mathcal{K}_4=\left\{
K_0=\begin{pmatrix}
0&0\\e^{i\theta_{00}}&0
\end{pmatrix},
K_1=\begin{pmatrix}
0&0\\0&e^{i\theta_{11}}
\end{pmatrix}
\right\},
\end{equation}

\begin{equation}\label{PIO5}
\mathcal{K}_5=\left\{K=\begin{pmatrix}
e^{i\theta_{00}}&0\\0&e^{i\theta_{01}}
\end{pmatrix}\right\}
\quad \mathrm{or} \quad
\mathcal{K}_6=\left\{K=\begin{pmatrix}
0&e^{i\theta_{01}}\\e^{i\theta_{00}}&0
\end{pmatrix}\right\}.
\end{equation}
\end{subequations}
The PIO with Kraus operators of Eqs. (\ref{PIO1}) or (\ref{PIO3})
are coherence-breaking channels \cite{CBC},
and the PIO with Kraus operators of Eq. (\ref{PIO5}) are CPO.
The transformable quantum states $\rho'$ by using PIO are
\begin{equation}\label{PIO-region}
\rho'=\Lambda^{PIO}(\rho)=\sum_{i=1}^{6}p_i\Lambda_i^{PIO}(\rho),
\end{equation}
due to Eq. (\ref{PIO}),
where $\Lambda_i^{PIO}(\rho)=\sum_{K_n\in\mathcal{K}_i}K_n\rho K_n^\dag$,
$p_i\geq0$, $\sum_ip_i=1$, and $\rho$ is initial quantum state $(z,r)$.
It is easy to check that $\Lambda_i^{PIO}(\rho)$ ($i=1\cdots6$)
are $(z,\pm r)$, $(-z,\pm r)$, $(\pm z,0)$, and $(\pm1,0)$ in the Bloch sphere representation.
Therefore,
the transformation region of single qubit states via PIO is a convex hexagon
with six vertexes: $(z,\pm r)$, $(-z,\pm r)$, and $(\pm1,0)$,
which is depicted by blue region in Fig. 3.
\begin{figure}[ht]
 \centering
 \includegraphics[height=6cm]{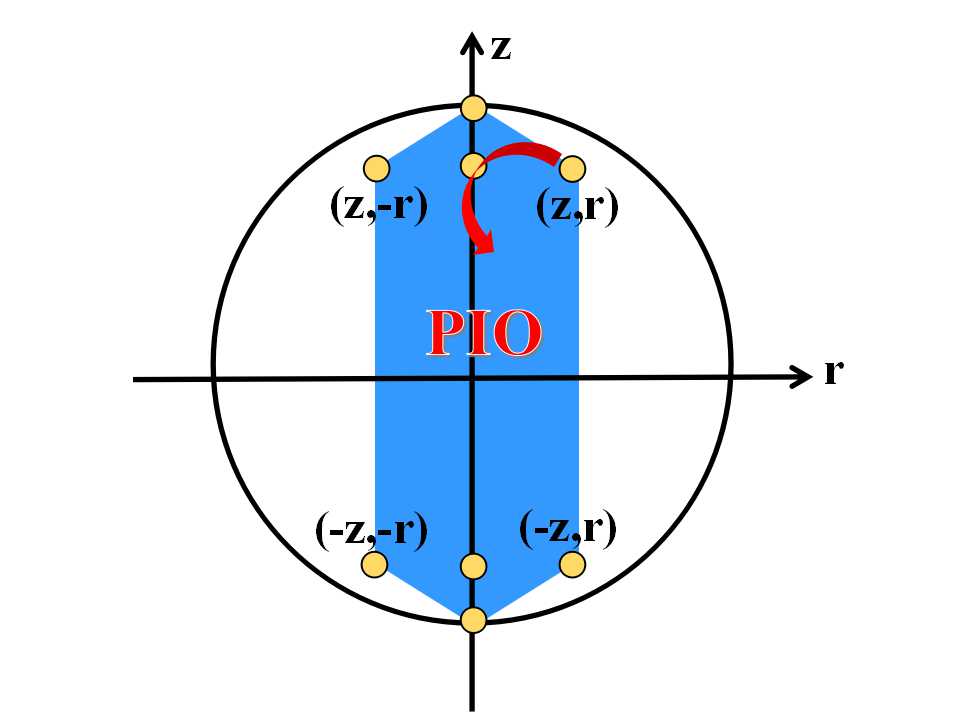}
 \caption{(Color online)  Single qubit transformations under PIO.
  The initial quantum state $\rho$ is $(z,r)$, and $\Lambda_i^{PIO}(\rho)$ are depicted by yellow points.
  The transformation region is represented by blue region.
  }
  \end{figure}

By introducing the Bloch sphere depiction of the transformation region, we can see that the coherence-free operations have a clear hierarchical relationship in single qubit systems:
CPO $\subset$ PIO $\subset$ SIO=IO; see Fig. 4.
\begin{figure}[ht]
 \centering
 \includegraphics[height=5cm]{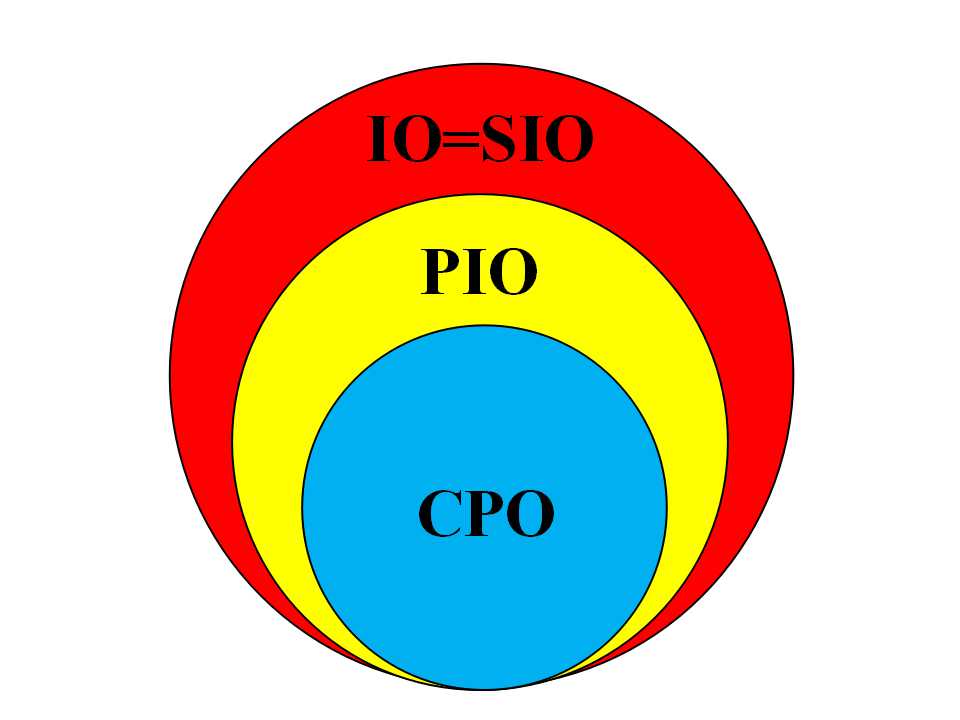}
 \caption{(Color online) The hierarchical structure of IO, SIO, PIO, and CPO in single qubit systems.}
 \label{fig:algorithm}
\end{figure}
\\
\\
\textbf{Example1-Maximally coherent state transformations via IO.}\quad
In Ref. \citen{Baumgratz}, Baumgratz \emph{et. al.} firstly found that
a d-dimensional maximally coherent state can be transformed to all other d-dimensional quantum states by means of IO.
However, the transformation in the proof of Ref. \citen{Baumgratz} is probabilistic.
Hence, how to prove that
a maximally coherent state allows for the deterministic generation of all other quantum states
is still an open question.
Here, we prove it in the case of single qubit systems.
In our notation, the maximally coherent state is denoted by $(z=0,$ $r=\pm1)$.
According to Eq. (\ref{main result}), the transformation region of maximally coherent state is
\begin{eqnarray}
\left\{
  \begin{array}{ll}
    z'^2+r'^2\leq1;\\
    |r'|\leq 1,
  \end{array}
\right.
\end{eqnarray}
which contains all single qubits (see Fig. 5).
Therefore, any single qubit can be determinately generated by a maximally coherent state by using IO.
\begin{figure}[ht]
 \centering
 \includegraphics[height=6cm]{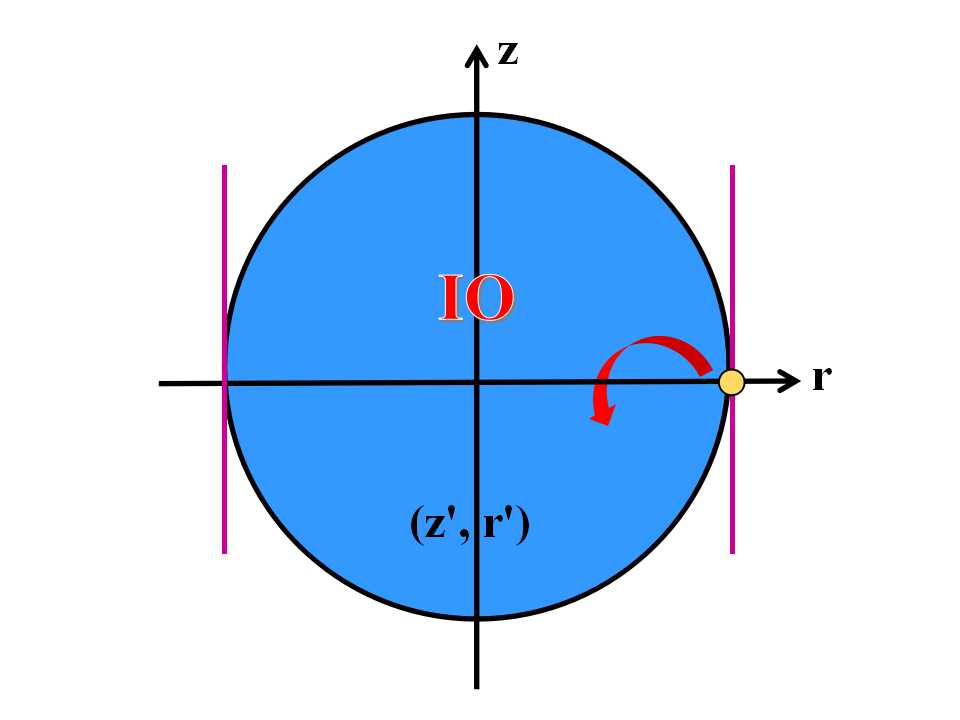}
 \caption{(Color online)  The transformation region given by IO is depicted by blue.}
\end{figure}

Now we construct the corresponding IO for a target quantum state $(z'=1/2,$ $r'=1/2)$ as an example
and $(z=0,$ $r=1)$ is chosen as the initial quantum state.
By virtue Eqs. (\ref{step2}) and (\ref{step1}), we obtain
\begin{equation}
\cos\theta=\frac{z'}{\sqrt{1-r'^2}}.
\end{equation}
Thus, $\tilde{\alpha}=r'\sqrt{(1-r'^2-z'^2)/(2-2r'^2)}=1/(2\sqrt{3})$
and $\tilde{\beta}=z'/\sqrt{2}=1/(2\sqrt{2})$.
Since $\alpha=(1+\sqrt{2}\tilde{\alpha}+\sqrt{2}\tilde{\beta})/2$
and $\alpha=(1+\sqrt{2}\tilde{\alpha}-\sqrt{2}\tilde{\beta})/2$,
we have $\alpha=3/4+1/(2\sqrt{6})$ and $\beta=1/4+1/(2\sqrt{6})$.
Due to $\sqrt{\alpha\beta}=\sqrt{(11+4\sqrt{6})/12}/2\geq\lambda=1/2$
and $(\lambda^2+\alpha+\beta-1)/\lambda=1/2+2/\sqrt{6}\geq0$,
we choose case 2 to construct Kraus operators and IO is
\begin{eqnarray}
\Lambda^{IO}=
\left\{\begin{pmatrix}\sqrt{\frac{3}{4}+\frac{1}{2\sqrt{6}}} &0\\0&\sqrt{\frac{1}{4}+\frac{1}{2\sqrt{6}}}
\end{pmatrix},\right.\n
\left.\begin{pmatrix}0&\sqrt{\frac{3}{4}-\frac{1}{2\sqrt{6}}}\\ -\sqrt{\frac{1}{4}-\frac{1}{2\sqrt{6}}}&0
\end{pmatrix}
\right\}.
\end{eqnarray}
\\
\\
\textbf{Example2-Pure state transformations via IO.}\quad
By using the Bloch sphere depiction of the transformation region,
one can see clearly that
$\ket{\psi}$ denoted by $(z=\sqrt{1-r^2},$ $r)$
transforms to $\ket{\phi}$ denoted by $(z'=\sqrt{1-r'^2},$ $r')$
using IO if and only if $\Cl(\ket{\psi})\geq\Cl(\ket{\phi})$ (see Fig. 6).
\begin{figure}
 \centering
 \includegraphics[height=6cm]{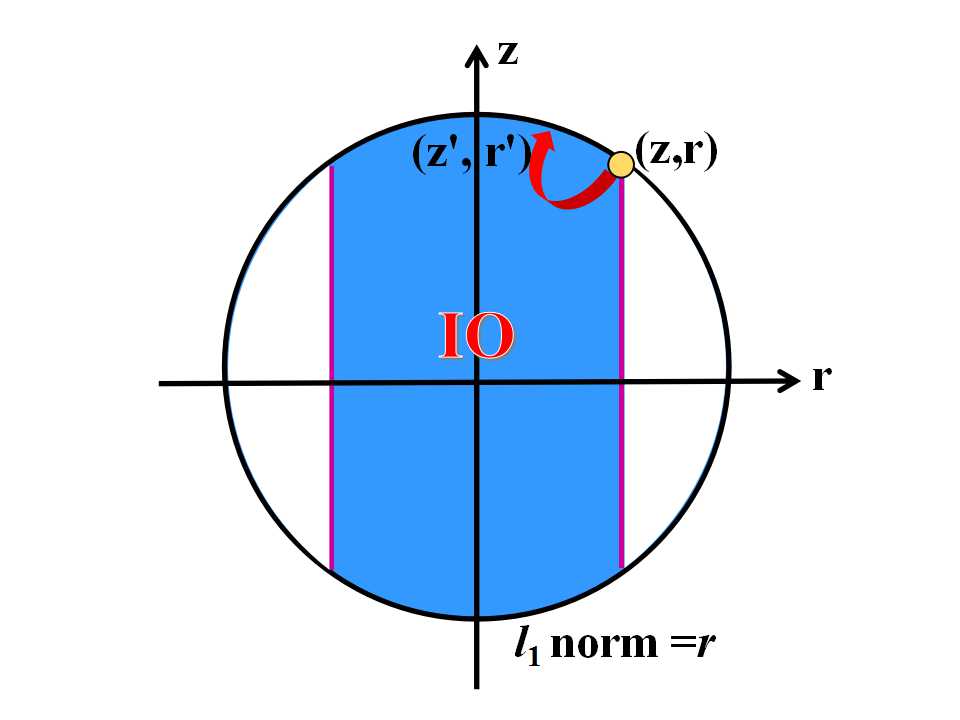}
 \caption{(Color online)  The transformation region given by IO is depicted by blue.}
\end{figure}
Similarly, we can also construct the corresponding IO for
$(z=1/\sqrt{3},$ $r=\sqrt{2/3})$ and $(z'=1/\sqrt{2},$ $r'=1/\sqrt{2})$
as a example, by using
Eqs. (\ref{Kraus-special}), (\ref{step2}), and (\ref{step1}).
The IO is
\begin{eqnarray}
\Lambda^{IO}=
\left\{\begin{pmatrix}\sqrt{\frac{1}2+\frac{\sqrt6}8+\frac{\sqrt2}8} &0\\0&\sqrt{\frac{1}2+\frac{\sqrt6}8-\frac{\sqrt2}8}
\end{pmatrix},\right.\n
\left.\begin{pmatrix}0&\sqrt{\frac{1}2-\frac{\sqrt6}8+\frac{\sqrt2}8}\\ \sqrt{\frac{1}2-\frac{\sqrt6}8-\frac{\sqrt2}8}&0
\end{pmatrix}
\right\}.
\end{eqnarray}

\section*{Discussion}\label{sec6}
In this paper, we have systematically studied the single qubit transformations
under IO, SIO, CPO, and PIO.
By introducing the Bloch sphere depiction,
we show that the transformation ability of single qubit via IO, SIO, CPO or PIO
has rotational symmetry around z-axis.
A quantitative correspondence between IO and SIO in single qubit systems has been established via Eq. (\ref{solution-IO=SIO}).
Therefore, we can concretely construct a SIO to replace a IO in single qubit transformations,
while keeping the initial and final states unchanged.
In the discussion of single qubit transformation via IO,
we provide a new and direct approach to obtain the necessary and sufficient condition.
The maximally single transformation region given by IO is depicted,
whose boundary is limited by the coherence value ($\Cl$) of initial state.
Our proof indicates that we can use a kind of special operation,
SIO with only two Kraus, to realize
all possible single qubit transformations given by IO or SIO.
And these special operations can be accurately constructed.
One of its Kraus operators is represented by a diagonal matrix,
and the other is represented by anti-diagonal matrix.
Finally, by calculating the transformation regions given by the above four operations,
we can understand the hierarchical relationship:
CPO $\subset$ PIO $\subset$ SIO=IO in single qubit systems
more directly.

An interesting question is whether
the transformation region of an initial qubit given by IO
can be defined as a coherence measure (denoted as $\mathcal{C}_a$) for single qubit systems.
From Fig. 2 and Theorem~\ref{thm:2}, we can see clearly that $\mathcal{C}_a$ fulfils conditions (B1) and (B2').
Other conditions, (B2) and (B3), for quantifying a suitable coherence measure need to be explored further.
Our results lead to an easy-operated and visual geometric
method to explore the power of coherence-free operations in single qubit manipulation,
and is worth applying to
investigate coherence transformations in multi-particle systems.

\section*{Methods}
\textbf{Proof of Theorem 2.}\quad
According to Theorem~\ref{thm:1}, IO can be expressed as $\Lambda^{IO}=\{K_i,K_j\}$,
where
\begin{equation}
K_i=
\begin{pmatrix}
a_i&0\\0&b_i
\end{pmatrix}
\quad \mathrm{and}\quad
K_j=\begin{pmatrix}
0&d_j\\c_j&0
\end{pmatrix}.
\end{equation}
The transformable states via IO are
\begin{eqnarray}
\Lambda^{IO}(\rho)&=&
\sum_iK_i\rho K_i^\dag+\sum_jK_j\rho K_j^\dag\n
&=&\frac{1}{2}\begin{pmatrix}
\sum_i|a_i|^2(1+z)+\sum_j|d_j|^2(1-z)&(\sum_ia_ib_i^*+\sum_jd_jc_j^*)r\\
(\sum_ib_ia_i^*+\sum_jc_jd_j^*)r&\sum_i|b_i|^2(1-z)+\sum_j|c_j|^2(1+z)
\end{pmatrix}.
\end{eqnarray}
In other words, the transformable range $(z',$ $r')$ represented in the Bloch sphere is given by
\begin{numcases}{}
r'=gr,\\
\label{constraint1}z'=\sum_i|a_i|^2(1+z)+\sum_j|d_j|^2(1-z)-1,
\end{numcases}
with $g=\sum_ia_ib_i^*+\sum_jd_jc_j^*$.
Another constraint is
\begin{equation}\label{constraint2}
\sum_i|a_i|^2+\sum_j|c_j|^2=\sum_i|b_i|^2+\sum_j|d_j|^2=1,
\end{equation}
due to the condition of $\sum_iK_i^\dag K_i+\sum_jK_j^\dag K_j=I$.
By choose suitable phases for $a_i$, $b_i$, $c_j$, and $d_j$,
we can get
\begin{equation}
|g|=\sum_i|a_i|\cdot|b_i|+\sum_j|d_j|\cdot|c_j|.
\end{equation}

Now we use the Lagrangian multiplier method to calculate
the extremum of $|g|$ under the constraints of Eqs. (\ref{constraint1}) and (\ref{constraint2}).
Define Lagrangian function
$G=G(|a_i|,$ $|b_i|,$ $|c_j|,$ $|d_j|,$ $\lambda_1,$ $\lambda_2,$ $\lambda_3)$ as the following form:
\begin{eqnarray}
G&=&|g|+\lambda_1[\sum_i|a_i|^2(1+z)+\sum_j|d_j|^2(1-z)-(1+z')]\n
&&+\lambda_2(\sum_i|a_i|^2+\sum_j|c_j|^2-1)
+\lambda_3(\sum_i|b_i|^2+\sum_j|d_j|^2-1).
\end{eqnarray}
At the extreme point, the partial derivatives of $G$ are equal to zero,
and then we obtain that
\begin{subequations}
\begin{numcases}{}
\label{constrainta}|a_i|=-2|b_i|\lambda_3,\\
\label{constraintb}|d_j|=-2|c_j|\lambda_2,\\
\label{constraintc}\sum_i|a_i|^2+\sum_j|c_j|^2=1,\\
\label{constraintd}\sum_i|b_i|^2+\sum_j|d_j|^2=1,\\
\label{constrainte}4\lambda_3=\frac{1}{\lambda_1(1+z)+\lambda_2},\\
\label{constraintf}4\lambda_2=\frac{1}{\lambda_1(1-z)+\lambda_3},\\
\label{constraintg}z'=\sum_i|a_i|^2(1+z)+\sum_j|d_j|^2(1-z)-1.
\end{numcases}
\end{subequations}
According to Eq. (\ref{constrainta}), (\ref{constraintb}), (\ref{constraintc}), and (\ref{constraintd}),
we have
\begin{eqnarray}\label{solution(abcd)}
\left\{
\begin{array}{l}
\sum_i|a_i|^2=\frac{4\lambda_3^2(1-4\lambda_2^2)}{1-16\lambda_2^2\lambda_3^2},\\
\sum_i|b_i|^2=\frac{1-4\lambda_2^2}{1-16\lambda_2^2\lambda_3^2},\\
\sum_j|c_j|^2=\frac{1-4\lambda_3^2}{1-16\lambda_2^2\lambda_3^2},\\
\sum_j|d_j|^2=\frac{4\lambda_2^2(1-4\lambda_3^2)}{1-16\lambda_2^2\lambda_3^2}.
\end{array}
\right.
\end{eqnarray}
Solving Eqs. (\ref{constrainte}) and (\ref{constraintf}),
we get
\begin{eqnarray}\label{lambda12}
\lambda_3=-\frac{1-z}{2}(\lambda_1+\sqrt{\lambda_1^2+\frac{1}{1-z^2}}),\n
\lambda_2=-\frac{1+z}{2}(\lambda_1+\sqrt{\lambda_1^2+\frac{1}{1-z^2}}).
\end{eqnarray}
Note that $\lambda_3$ should not be greater than zero due to Eq. (\ref{constrainta}), and the solution of Eqs. (\ref{constrainta}), (\ref{constraintb}), (\ref{constraintc}), and (\ref{constraintd})
does not exist if we choose $\lambda_1=0$ as the solution of Eqs. (\ref{constrainte}) and (\ref{constraintf}).
By substituting Eqs. (\ref{solution(abcd)}) and (\ref{lambda12}) into Eq. (\ref{constraintg}),
we have
\begin{equation}\label{Eq-k}
\kappa^2(1-z^2)^2(1-z')-2\kappa(1-z^2)+1+z'=0,
\end{equation}
where $\kappa=[\lambda_1+\sqrt{\lambda_1^2+1/(1-z^2)}]^2$.
The solutions of Eq. (\ref{Eq-k}) are $\kappa_1=(1+z')/[(1-z^2)(1-z')]$ or $\kappa_2=1/(1-z^2)$.
Since $\lambda_1\neq0$, we choose $\kappa=\kappa_1$.
By using the above results,  the extremum of $|g|$ is
\begin{equation}
|g|_{opt}=(1-z')\sqrt{\kappa}=\sqrt{\frac{1-z'^2}{1-z^2}},
\end{equation}
which means that the transformable range by using IO is limited by
\begin{equation}\label{region1}
|r'|=|gr|\leq|g|_{opt}|r|=\sqrt{\frac{1-z'^2}{1-z^2}}|r|.
\end{equation}
The above equation can be rewritten as
\begin{equation}
\frac{z'^2}{1}+(1-z^2)\frac{r'^2}{r^2}\leq1,
\end{equation}
which is just a part of the boundary of transformation region
calculated from a special kind of IO (\ref{Kraus-special}).
Therefore, the maximal transformation region of initial qubit $(z,$ $r)$ via IO
is given by Eq. (\ref{main result}).
\hfill $\square$
\vspace{1ex}


\section*{Acknowledgements}
We thank J.-X. Hou, Y.-Z. Liu, and Y.-H Shi for their valuable discussions.
This work was supported by the NSFC (Grant No.11375141, No.11425522, No.91536108, No.11647057, and No.11705146),
the special research funds of shaanxi province department of education (No.203010005),
Northwest University scientific research funds (No.338020004)
and the double first-class university construction project of Northwest
University.
\section*{Author contributions}

H.-L. Shi and X.-H. Wang initiated the research project and established the main results.
W.-L. Yang, H. Fan, Z.-Y. Yang, and S.-Y. Liu joined some discussions and provided suggestions.
H.-L. Shi and X.-H. Wang wrote the manuscript with advice from W.-L. Yang, H. Fan, Z.-Y. Yang, and S.-Y. Liu.

\section*{Additional information}

\textbf{Competing financial interests:} The authors declare no competing financial interests.

\end{document}